\documentclass[12pt, onecolumn]{IEEEtran}
\usepackage{epsfig,latexsym}
\usepackage{float}
\usepackage{indentfirst}
\usepackage{amsmath}
\usepackage{amssymb}
\usepackage{times}
\usepackage{subfigure}
\usepackage{psfrag}
\usepackage{cite}
\usepackage{lastpage}
\linespread{1.6}
\usepackage{fancyhdr}
\usepackage{color}
 \usepackage{amsthm}
\usepackage{bigints}
\sloppy

\newcommand{\block}[1]{
  \underbrace{\begin{matrix}0 & \cdots & 0\end{matrix}}_{#1}
}
\newtheorem{theorem}{Theorem}
\newtheorem{Lemma}{Lemma}
\newtheorem{lemma}[Lemma]{$\mathbf{Lemma}$}

\begin{document}
\title{ {\huge  Design of Massive-MIMO-NOMA with Limited Feedback }}

\author{ Zhiguo Ding, \IEEEmembership{Senior Member, IEEE} and  H. Vincent Poor, \IEEEmembership{Fellow, IEEE}\thanks{
The authors   are with the Department of
Electrical Engineering, Princeton University, Princeton, NJ 08544,
USA.   Z. Ding is also with the School of
Computing and Communications, Lancaster
University, LA1 4WA, UK.
}\vspace{-2em}} \maketitle
\begin{abstract}
In this letter, a low-feedback non-orthogonal multiple access (NOMA) scheme using massive multiple-input multiple-output (MIMO) transmission is proposed. In particular, the proposed scheme can decompose a massive-MIMO-NOMA system  into multiple separated single-input single-output NOMA channels, and analytical results are developed to evaluate the performance of the proposed scheme for   two scenarios, with perfect user ordering and with one-bit feedback, respectively.
\end{abstract}\vspace{-1.2em}
\section{Introduction}
Non-orthogonal multiple access (NOMA) has recently been recognized as a key multiple access (MA) technique for solving the spectrum crunch faced by   next generation mobile networks \cite{NOMAPIMRC,Nomading,7015589}. Unlike conventional orthogonal MA (OMA), NOMA uses the power  domain to serve users in the same time/code/frequency channels,  where successive interference cancellation (SIC) is employed at  users with better channel conditions.  The impact of dynamic user locations on the performance of NOMA was investigated in \cite{Nomading}, and realizing different user fairness criteria in NOMA systems has been studied in \cite{Krikidisnoma}.  The use of multiple-input multiple-output (MIMO) technologies can further improve the performance of NOMA, and the optimal design of MIMO-NOMA has been investigated in  \cite{7015589} and \cite{7095538}.

Most of existing work on MIMO-NOMA, such as \cite{7015589} and \cite{7095538}, has assumed  perfect knowledge of channel state information (CSI) at the transmitter, which is difficult to realize in practice. Particularly when massive MIMO is applied to NOMA, the perfect CSI assumption can consume excessive bandwidth resources \cite{6736761}. The scheme proposed in  \cite{Zhiguo_mimoconoma} does not need CSI at the transmitter,  but requires that the number of the receive antennas is larger that of the transmitter.  In this paper   a massive-MIMO-NOMA downlink transmission protocol is proposed, which does not require the users to feed their channel matrices back to the base station. In particular, by employing the spatial clustering structure of users' channel matrices and applying the concept of    MIMO-NOMA in \cite{Zhiguo_mimoconoma}, the massive-MIMO-NOMA system can be decomposed into separated single-input single-output (SISO) NOMA channels. Then   the    performance analysis  for the proposed protocol is carried out for two scenarios, with perfect user ordering and with one-bit feedback, respectively. Exact expressions for the outage probabilities for the two scenarios are developed and the achievable diversity order is also obtained by applying high signal-to-noise ratio (SNR) approximations.

\section{Description of Massive-MIMO-NOMA}
Consider a downlink transmission scenario with one base station communicating with multiple users. The base station is equipped with $M$ antennas, and each user has $N$ antennas, where $M>>N$.  Following the geometrical one-ring scattering model used in \cite{6542746} and \cite{Dai15}, we divide the users into $K$ spatial clusters, in which  there are $L$ users in each cluster sharing the same spatial correlation matrix, denoted by $\mathbf{R}_{k}$. Therefore for the $l$-th user in the $k$-th cluster,  its channel matrix, denoted by $\mathbf{H}_{k,l}$,  can be decomposed  as  $\mathbf{H}_{k,l} = \mathbf{G}_{k,l}\mathbf{\Lambda}_k^{\frac{1}{2}}\mathbf{U}_k$, where $ \mathbf{G}_{k,l}$ denotes an $N\times r_k$ fast-fading complex Gaussian matrix,   $\mathbf{\Lambda}_k$ is a $r_k \times r_k$ diagonal matrix containing $r_k$ non-zero eigenvalues of $\mathbf{R}_k$, and  $\mathbf{U}_k$, a  $r_k\times M$ matrix,  collects all the corresponding eigenvectors of $\mathbf{R}_{k}$, i.e.,  $ \mathcal{E}\{\mathbf{H}_{k,l}^H\mathbf{H}_{k,l}\}= \mathbf{U}_k^H\mathbf{\Lambda}_k   \mathbf{U}_k = \mathbf{R}_{k}$.

While the CSI about fast varying $\mathbf{G}_{k,l}$ is difficult to obtain at the transmitter, $\mathbf{R}_k$ represents channel correlation and varies slowly, so it is more reasonable to assume that the base station has access to $\mathbf{R}_k$.
To use this spatial clustering  feature, the base station will send the following superimposed signal:
\begin{align}
\mathbf{s} = \sum^{K}_{k=1} \mathbf{P}_k \sum^{L}_{l=1}\mathbf{w}_{k,l}s_{k,l},
\end{align}
where the $M\times \tilde{M}_k$ precoding matrix, $\mathbf{P}_k$, is used to avoid the inter-cluster interference, $s_{k,l}$ is the message to the $l$-th user in the $k$-th cluster, and the $\tilde{M}_k\times 1$   vector, $\mathbf{w}_{k,l}$, is the precoding vector for the $l$-th user in the $k$-th cluser. For notational  simplicity, we assume  that $\tilde{M}_k$ and $r_k$ are the same among all clusters, and are denoted by $\tilde{M}$ and $r$, respectively.
\vspace{-1.5em}
\subsection{Inter-Cluster Interference Cancellation}
The $l$-th user in the $k$-th cluster observes the following
\begin{align}\label{original model}
\mathbf{y}_{k,l} =\mathbf{G}_k\mathbf{\Lambda}_k^{\frac{1}{2}}\mathbf{U}_k  \sum^{K}_{i=1} \mathbf{P}_i \sum^{L}_{j=1}\mathbf{w}_{i,j}s_{i,j} +\mathbf{n}_{k,l},
\end{align}
where $\mathbf{n}_{k,l}$ is the noise vector. In order to avoid   inter-cluster interference, $\mathbf{P}_k$ needs to satisfy the following constraint:
\begin{align}\label{xx2}
\begin{bmatrix} \mathbf{U}_{1}^H &\cdots &\mathbf{U}_{k-1}^H  &\mathbf{U}_{k+1}^H &\cdots &\mathbf{U}_{K}^H \end{bmatrix}^H\mathbf{P}_k =\mathbf{0},
\end{align}
where $\tilde{M}= (M-r(K-1))$.
By using   $\mathbf{P}_k$ in \eqref{xx2}, the system model can be simplified as follows:
  \begin{align}\label{small mimo}
\mathbf{y}_{k,l}& =\underset{N\times \tilde{M}}{\underbrace{\mathbf{G}_{k,l}\mathbf{\Lambda}_k^{\frac{1}{2}}\mathbf{U}_k   \mathbf{P}_k }} \sum^{L}_{j=1}\mathbf{w}_{k,j}s_{k,j} +\mathbf{n}_{k,l},
\end{align}
where the inter-cluster interference is removed.
\vspace{-1em}
\subsection{The Application of the NOMA Approach}
As can be observed from \eqref{small mimo}, the system model with  massive MIMO has been decomposed into $K$ separated ones with $\tilde{M}$ effective transmit antennas for each cluster. There are still two difficulties in the simplified   NOMA  system. One is how to accommodate   $L$ users in one cluster with $\tilde{M}$ transmit antennas, particularly if $L>\tilde{M}$. The other is that the base station has no access to the instantaneous realization of $\mathbf{G}_{k,l}$.

The MIMO-NOMA concept proposed in \cite{Zhiguo_mimoconoma} can be applied here. Particularly  we can further divide the $L$ users in one cluster into $Q$ groups of size $P$, i.e. $L=PQ$. Denote the signal transmitted to the $L$ users in the $k$-th cluster by $\tilde{\mathbf{s}}_k\triangleq \sum^{L}_{l=1}\mathbf{w}_{k,l}s_{k,l}$ which   can be further written as follows:
  \begin{align}
\tilde{\mathbf{s}}_k=\sum^{Q}_{q=1}\sum^{P}_{p=1}\mathbf{w}_{k,q,p} s_{k,q,p},
\end{align}
where $s_{k,q,p}$ is a message to the $p$-th user in the $q$-th group in the $k$-th cluster and $\mathbf{w}_{k,q,p}$ denotes the corresponding precoding vector.

For the users within the same group, NOMA is applied, i.e. the users share the same precoding vector but with different power allocation coefficients, i.e.
  \begin{align}
\tilde{\mathbf{s}}_k= \sum^{Q}_{q=1}\mathbf{w}_{k,q}\sum^{P}_{p=1} \alpha_{k,q,p} s_{k,q,p},
\end{align}
where, for easiness of notation, we assume $\sum^{P}_{p=1} \alpha_{k,q,p}^2=1$  and $|\mathbf{w}_{k,q}|^2=1$, i.e., the total transmission power at the base station is $QK$. Without knowing the CSI at the transmitter, we simply let
$ \mathbf{w}_{k,q}=
  \begin{bmatrix}
  \smash[b]{\block{q-1}} & 1 & \smash[b]{\block{\tilde{M}-q}}
  \end{bmatrix}^T
$.
\vspace{-1em}
\subsection{Receiver Detection}
With the aforementioned design, the observation at the   $p$-th user in the $q$-th group in the $k$-th cluster can be expressed as follows:
  \begin{align}
\mathbf{y}_{k,q,p}& =\tilde{\mathbf{H}}_{k,q,p}\sum^{Q}_{m=1}\mathbf{w}_{k,m}\sum^{P}_{n=1} \alpha_{k,m,n} s_{k,m,n} +\mathbf{n}_{k,q,p},
\end{align}
where
$\tilde{\mathbf{H}}_{k,q,p}=\mathbf{G}_{k,q,p}\mathbf{\Lambda}_k^{\frac{1}{2}}\mathbf{U}_k   \mathbf{P}_k$. Assume $N\geq \tilde{M}$.  By applying the zero forcing approach, the user observes the following:
  \begin{align}\label{final system model}
\left[\tilde{\mathbf{H}}_{k,q,p}^{\dagger}\mathbf{y}_{k,q,p}\right]_{q,1}& = \sum^{P}_{n=1} \alpha_{k,q,n} s_{k,q,n} +\tilde{\mathbf{H}}_{k,q,p}^{\dagger}\mathbf{n}_{k,l},
\end{align}
where $\tilde{\mathbf{H}}_{k,q,p}^{\dagger}=(\tilde{\mathbf{H}}_{k,q,p}^H\tilde{\mathbf{H}}_{k,q,p})^{-1}\tilde{\mathbf{H}}_{k,q,p}^H$ and $[\mathbf{A}_{i,j}]$ denotes the element at the $i$-th row and $j$-th column of $\mathbf{A}$.
The covariance of the noise vector is given by
  \begin{align}
&\mathbf{C}_{k,q,p}= \tilde{\mathbf{H}}_{k,q,p}^{\dagger} (\tilde{\mathbf{H}}_{k,q,p}^{\dagger} )^H
\\\nonumber &=  ( \mathbf{P}_k^H\mathbf{U}_k^H\mathbf{\Lambda}_k^{\frac{1}{2}} \mathbf{G}_{k,q,p}^H  \mathbf{G}_{k,q,p}\mathbf{\Lambda}_k^{\frac{1}{2}}\mathbf{U}_k   \mathbf{P}_k)^{-1}=  ( \mathbf{P}_k^H \mathbf{R}_k   \mathbf{P}_k)^{-1}.
\end{align}

Comparing \eqref{final system model} to \eqref{original model}, the massive-MIMO-NOMA system model has been decomposed into separated SISO-NOMA channels, without knowing  $\mathbf{G}_{k,q,p}$ at the base station.

\section{Performance Analysis}
\subsection{With Perfect Knowledge of User Ordering}
Suppose that the base station knows   that    the users  are ordered  as follows:
\begin{align}\label{order}
\frac{1}{[\mathbf{C}_{k,q,1}]_{q,q}} \leq \cdots \leq \frac{1}{[\mathbf{C}_{k,q,P}]_{q,q}}.
\end{align}

Therefore  SIC can be carried out at the users. Particularly within the $q$-th group of the $k$-th cluster, the message to the $n$-th user can be decoded at the $p$-th user with the following signal-to-interference-plus-noise (SINR):
  \begin{align}
SINR_{k,q,p}^n&= \frac{\rho \frac{1}{[\mathbf{C}_{k,q,p}]_{q,q}}  \alpha_{k,q,n}^2  }{1+\sum^{P}_{m=n+1}\rho \frac{1}{[\mathbf{C}_{k,q,p}]_{q,q}}  \alpha_{k,q,m}^2 },
\end{align}
for $1\leq n\leq p\leq P$ except  $n=p=P$. When  $n=p=P$, the SINR becomes the following SNR:
\begin{align}
SINR_{k,q,P}^P&=  \rho \frac{1}{[\mathbf{C}_{k,q,P}]_{q,q}}  \alpha_{k,q,P}^2.
\end{align}
Denote the $p$-th user in the $q$-th group of the $k$-th cluster by $U_{k,q,p}$.  The following theorem provides the outage performance achieved by the proposed massive-MIMO-NOMA scheme.
\begin{theorem}\label{theorem1}
The outage probability at $U_{k,q,p}$ achieved by the proposed massive-MIMO-NOMA scheme is given by
{\small \begin{align}
\mathrm{P}_{k,q,p} &=   \sum^{p-1}_{i=0}{p-1 \choose i} (-1)^i  \pi_p^P \frac{1-\left(F(\rho\xi^*_{k,q,p}) \right)^{P-p+i+1}}{P-p+i+1} ,
\end{align}}
$\hspace{-0.8em}$if $\xi^*_{k,q,p}\geq 0$, otherwise $\mathrm{P}_{k,q,p}^n=1$,   where $\xi^*_{k,q,p}=\min \left\{ \xi_{k,q,n}, 1\leq n \leq p \right\}$,  $a_{k,q} = \frac{1}{[( \mathbf{P}_k^H\mathbf{R}_k \mathbf{P}_k )^{-1}]_{q,q}}$, $\pi_p^P= \frac{P!}{(P-p)!(p-1)!}$, $R_{k,q,n}$ is the targeted rate for $U_{k,q,n}$,  $\tau_{k,q,n}=2^{R_{k,q,n}}-1$,   $\gamma(\cdot)$ denotes the incomplete gamma function,
\begin{align}\xi_{k,q,n}=\left\{\begin{array}{ll}\frac{\alpha_{k,q,P}^2   }{\tau_{k,q,P}}, &  \text{if}~ $n=P$\\\frac{ \alpha_{k,q,n}^2 - \tau_{k,q,n} \sum^{P}_{m=n+1}  \alpha_{k,q,m}^2 }{\tau_{k,q,n}}, & \text{otherwise}\end{array} \right.,
\end{align}
and \begin{align}
F(\rho\xi_{k,q,n})  =   1- \frac{\gamma\left(N-\tilde{M}+1, \frac{1}{\rho a_{k,q}\xi_{k,q,n}}\right)}{\Gamma(N-\tilde{M}+1)}.
\end{align}
In addition, the diversity order achieved   at $U_{k,q,p}$ is $p(N-\tilde{M}+1)$.
\end{theorem}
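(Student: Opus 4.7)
My plan is to reduce the outage analysis at $U_{k,q,p}$ to the distribution of the $p$-th order statistic of $P$ i.i.d.\ scaled Gamma variables. First I would translate the outage event. Writing $X_n\triangleq 1/[\mathbf{C}_{k,q,n}]_{q,q}$ for the effective channel gain of the $n$-th user in the group, the SIC chain at $U_{k,q,p}$ succeeds iff $SINR_{k,q,p}^n\geq \tau_{k,q,n}$ for every $n\leq p$. Rearranging and isolating $X_p$ in each inequality gives $X_p\geq 1/(\rho\xi_{k,q,n})$ whenever $\xi_{k,q,n}>0$, and makes the event infeasible (forcing $\mathrm{P}_{k,q,p}=1$) as soon as any $\xi_{k,q,n}\leq 0$; intersecting the $p$ feasible constraints collapses them into the single threshold $1/(\rho\xi^*_{k,q,p})$ with $\xi^*_{k,q,p}=\min_{n\leq p}\xi_{k,q,n}$.

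\textbf{Marginal distribution of $X_n$.} Since $\tilde{\mathbf{H}}_{k,q,n}=\mathbf{G}_{k,q,n}\mathbf{\Lambda}_k^{1/2}\mathbf{U}_k\mathbf{P}_k$ has independent complex Gaussian rows with covariance $\mathbf{B}\triangleq\mathbf{P}_k^H\mathbf{R}_k\mathbf{P}_k$, the standard result for a diagonal entry of an inverse correlated Wishart matrix yields $X_n\sim a_{k,q}\cdot\mathrm{Gamma}(N-\tilde{M}+1,1)$ with $a_{k,q}=1/[\mathbf{B}^{-1}]_{q,q}$. Evaluating the complementary CDF at $1/(\rho\xi)$ then reproduces the expression for $F(\rho\xi)$ stated in the theorem. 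Independence of $\{\mathbf{G}_{k,q,n}\}_{n=1}^P$ across the $P$ users in the group makes $\{X_n\}$ i.i.d., which is precisely the input required for the order statistic step.

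\textbf{Closed-form outage and diversity.} Using the pdf $\pi_p^P F_X(x)^{p-1}(1-F_X(x))^{P-p}f_X(x)$ of the $p$-th order statistic and the identity $1-F_X(1/(\rho\xi^*_{k,q,p}))=F(\rho\xi^*_{k,q,p})$, I would compute
\begin{align}
\mathrm{P}_{k,q,p}=\pi_p^P\int_0^{F_X(1/(\rho\xi^*_{k,q,p}))} v^{p-1}(1-v)^{P-p}\,dv
\end{align}
by the change of variable $w=1-v$ followed by a binomial expansion of $(1-w)^{p-1}=\sum_{i=0}^{p-1}\binom{p-1}{i}(-1)^i w^i$; the resulting elementary integrals produce exactly the terms in the stated sum. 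For the diversity order, I would use the small-argument asymptotic $\gamma(s,z)\sim z^s/s$ as $z\to 0$ to obtain $1-F(\rho\xi^*_{k,q,p})=\Theta(\rho^{-(N-\tilde{M}+1)})$, and either keep only the leading power of $\rho$ in the closed form or invoke $F_{X_{(p)}}(t)\sim\binom{P}{p}F_X(t)^p$ as $t\to 0$; either route yields $\mathrm{P}_{k,q,p}=\Theta(\rho^{-p(N-\tilde{M}+1)})$, giving the claimed diversity order. I expect the main obstacle to be the second step: verifying that, despite the correlation introduced by $\mathbf{\Lambda}_k$, $\mathbf{U}_k$ and $\mathbf{P}_k$, the marginal law of $1/[\mathbf{C}_{k,q,n}]_{q,q}$ collapses to a scaled $\mathrm{Gamma}(N-\tilde{M}+1,1)$ with scale exactly $a_{k,q}$, since both the Gamma shape parameter and the scaling $a_{k,q}$ must be extracted from a single diagonal entry of a correlated inverse Wishart matrix.
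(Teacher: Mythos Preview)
Your proposal is correct and follows essentially the same route as the paper: identify the marginal law of a single diagonal entry of a (correlated) inverse Wishart, apply order statistics for $P$ i.i.d.\ users, expand via the binomial theorem and integrate term by term, then use the small-argument asymptotic of the lower incomplete gamma to read off the diversity order. The only cosmetic difference is that the paper works directly with $[\mathbf{C}_{k,q,p}]_{q,q}$ and its inverse-Wishart marginal pdf (integrating from $\rho\xi^*$ to $\infty$), whereas you work with the reciprocal $X_p=1/[\mathbf{C}_{k,q,p}]_{q,q}$ as a scaled $\mathrm{Gamma}(N-\tilde{M}+1,1)$ and use the probability-integral-transform substitution $v=F_X(x)$; both computations collapse to the identical binomial sum.
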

\begin{proof} The proof can be divided into three steps as follows:
\subsubsection{Characterizing $SINR_{k,q,p}^n$}
To evaluate the outage probability, it is important to find the density function of $SINR_{k,q,p}^n$, where the key step is characterize $[\mathbf{C}_{k,q,p}]_{q,q}$.

First suppose users within the same group are {\it not ordered}. Since $\mathbf{G}_{k,q,p}$ are complex Gaussian distributed,
$\mathbf{G}_{k,q,p}^H  \mathbf{G}_{k,q,p}$ follows the complex Wishart distribution, i.e., $\mathbf{G}_{k,q,p}^H  \mathbf{G}_{k,q,p}\sim \mathcal{W}^C_{r}\left(N, \mathbf{I}_{r}\right)$ \cite{861781}. Therefore, the matrix in the covariance matrix, $\mathbf{C}_{k,q,p}$, is also complex Wishart distributed, i.e.,
\begin{align}\nonumber
 \mathbf{P}_k^H\mathbf{R}_k   \mathbf{P}_k \sim \mathcal{W}^C_{\tilde{M}}\left(N, \mathbf{P}_k^H\mathbf{R}_k \mathbf{P}_k \right),
\end{align}
which means that $\mathbf{C}_{k,q,p}$ is inverse Wishart distributed, i.e.,
\begin{align}
 \mathbf{C}_{k,q,p} \sim \mathcal{W}^{C^{-1}}_{\tilde{M}}\left(N, \mathbf{P}_k^H\mathbf{R}_k \mathbf{P}_k \right).
\end{align}
By using the marginal distribution for the  elements on the diagonal of an inverse Wishart matrix, the probability density function (pdf) of $[ \mathbf{C}_{k,q,p}]_{q,q}$ is given by \cite{861781}
\begin{align}
f (x) = \frac{x^{-(N-\tilde{M}+2)}}{\Gamma(N-\tilde{M}+1)a_{k,q}^{N-\tilde{M}+1}}e^{-\frac{1}{a_{k,q}x}},
\end{align}
where $a_{k,q} = \frac{1}{[( \mathbf{P}_k^H\mathbf{R}_k \mathbf{P}_k )^{-1}]_{q,q}}$. The cumulative distribution function (CDF), $F(x)$, can be found by integrating $f(x)$.

Because the $P$ users in each group are ordered as in \eqref{order}, the density function of $[ \mathbf{C}_{k,q,p}]_{q,q}$ at the $p$-th {\it ordered} user is given by \cite{David03}
\begin{align}
f_{p}(x) = \pi_p^Pf(x) \left(F(x) \right)^{P-p} \left(1-F (x) \right)^{p-1}.
\end{align}
where $\pi_p^P= \frac{P!}{(P-p)!(p-1)!}$.

\subsubsection{Calculating the outage probability}
The overall outage probability at $U_{k,q,p}$ is the event that this user cannot decode   the message   $s_{k,q,n}$, $1\leq n \leq p$.  Therefore the overall outage probability  at this user is given by
\begin{align}\nonumber
\mathrm{P}_{k,q,p} &= \mathrm{P}\left( \log(1+SINR_{k,q,p}^n)<R_{k,q,n}, \forall n\in \{1, \cdots, p\} \right)\\
 &=  \mathrm{P}\left([ \mathbf{C}_{k,q,p}]_{q,q}>\rho \xi^*_{k,q,p}\right),
\end{align}
where $\xi^*_{k,q,p}=\min \left\{ \xi_{k,q,n}, 1\leq n \leq p \right\}$.  By applying the pdf of $[ \mathbf{C}_{k,q,p}]_{q,q}$, the outage probability  is obtained as follows:
\begin{align}\nonumber
\mathrm{P}_{k,q,p} & =   \int_{\rho\xi^*_{k,q,p} }^{\infty} \pi_p^Pf(x) \left(F(x) \right)^{P-p} \left(1-F (x) \right)^{p-1}dx
\\ \nonumber & =  \sum^{p-1}_{i=0}{p-1 \choose i} (-1)^i \int_{\rho\xi^*_{k,q,p} }^{\infty} \pi_p^Pf(x) \left(F(x) \right)^{P-p+i} dx.
\end{align}
After some algebraic manipulations, an exact expression for the outage probability can be obtained as in the theorem.

\subsubsection{Obtaining the diversity order}
 At high SNR, i.e., $\rho$ approaches infinity,   $\frac{1}{\rho a_{k,q}\xi_{k,q,n}}$ approaches zero. Therefore, for a fixed $R_{k,q,n}$, we   have the following approximation:
\begin{align}\nonumber
F(\rho\xi_{k,q,n})  &=     e^{-\frac{1}{\rho a_{k,q}\xi_{k,q,n}}}\sum^{N-\tilde{M}}_{j=1}\frac{1}{j!\rho^j a_{k,q}^j\xi_{k,q,n}^j}\\ \nonumber &=   1- e^{-\frac{1}{\rho a_{k,q}\xi_{k,q,n}}} \sum^{\infty}_{j=N-\tilde{M}+1}\frac{1}{j!\rho^j a_{k,q}^j\xi_{k,q,n}^j} \\ \label{x1}  &\approx 1-\frac{1}{(N-\tilde{M}+1)!(\rho a_{k,q}\xi_{k,q,n})^{N-\tilde{M}+1}}.
\end{align}
By using the above approximation, the outage probability can be approximated at high SNR as follows:
 \begin{align}\nonumber
\mathrm{P}_{k,q,p} & =   \int_{\rho\xi^*_{k,q,p} }^{\infty} \pi_p^Pf(x) \left(F(x) \right)^{P-p} \left(1-F (x) \right)^{p-1}dx \\ \nonumber & \underset{(a)}{\rightarrow}   \int_{\rho\xi^*_{k,q,p} }^{\infty} \pi_p^Pf(x) \left(1-F (x) \right)^{p-1}dx  \\   & =  \pi_p^P  \left(1-F (\rho\xi^*_{k,q,p}) \right)^{p}\underset{(b)}\rightarrow \frac{1}{\rho^{p(N-\tilde{M}+1)}},
\end{align}
where both approximations, $(a)$ and $(b)$, are obtained due to \eqref{x1}. The proof is complete.
\end{proof}
\vspace{-1.5em}
\subsection{With One-Bit Feedback}
The scheme described  in the previous section does not need   global CSI about $\mathbf{H}_{k,q,p}$ at the base station; however, the base station still needs to  know how to order  the scalar effective channel gains $\frac{1}{[\mathbf{C}_{k,q,p}]_{q,q}}$, as in \eqref{order}. In this sub-section, we consider a scenario in which only one bit feedback is allowed. Particularly, each user will compare its effective channel gain to  a predefined threshold $\tau$, and feed one bit back to inform the base station whether its channel gain is below or above this threshold.
With one bit feedback, the users in the $q$-th group of the $k$-th cluster are divided into two sub-groups, denoted by $\mathcal{S}_1$ and $\mathcal{S}_2$, respectively\footnote{For notational simplicity, the subscripts of $k$ and $q$ are dropped in this subsection. }. When there is more than one user in a sub-group, the base station randomly orders the users, with predefined targeted data rates and power allocation coefficients. Due to   space limitations, we focus only on a special case with two users in each group (i.e. $P=2$). The outage probability for the weak user is given by
\begin{align}\label{p1}
\mathrm{P}_{1} &= \mathrm{P}\left(U_1\in \mathcal{S}_1, E\right) + \mathrm{P}\left(U_1\in \mathcal{S}_2, E\right)  ,
\end{align}
where $E$ denotes the event that outage occurs.
The first factor in the above equation can be written as follows:
\begin{align}
\mathrm{P}\left(U_1\in \mathcal{S}_1, E\right)  &=     \mathrm{P}\left(U_1\in \mathcal{S}_1,|\mathcal{S}_1|=1, E\right) \\ \nonumber & +\sum^2_{i=1}\mathrm{P}\left(U_1\in \mathcal{S}_1,|\mathcal{S}_1|=2 , E_i\right),
\end{align}
where $E_i$ denotes the event that   the user is randomly put as the $i$-th NOMA user and outage occurs.
The above probability can be calculated as follows:
  \begin{align}
\mathrm{P}\left(U_1\in \mathcal{S}_1, E\right)  &=     \mathrm{P}\left( [\mathbf{C}_{1}]_{q,q}> \phi_1, [\mathbf{C}_{2}]_{q,q}<\tilde{\tau}\right) \\ \nonumber & +\frac{1}{2}\sum^2_{i=1} \mathrm{P}\left( [\mathbf{C}_{1}]_{q,q}>\phi_i, [\mathbf{C}_{2}]_{q,q}>\tilde{\tau}\right),
\end{align}
where $\tilde{\tau}=\frac{1}{\tau}$ and $\phi_i=\max\left\{\tilde{\tau},\rho\xi^*_{i}\right\}$.
The joint pdf of the two effective channels is given by $f_{[\mathbf{C}_{1}]_{q,q},[\mathbf{C}_{2}]_{q,q}}(x,y)=2f(x)f(y)$, which means that the above probability can be found as follows:
\begin{align}\nonumber
&\mathrm{P}\left(U_1\in \mathcal{S}_1, E\right)  =   2 \left(1- F\left( \phi_1\right)\right)F\left(\tilde{\tau}\right)\\ \nonumber &+\frac{1}{2}\sum^2_{i=1} \left(1 - \left[F\left( \phi_i\right)\right]^2\right)  - \sum^2_{i=1} F\left(\tilde{\tau}\right)\left(1 - F\left(\phi_i\right)\right),
\end{align}
for $\xi^*_{i}\geq0$.
The second factor in \eqref{p1} can be written as follows:
\begin{align}
\mathrm{P}\left(U_1\in \mathcal{S}_2, E\right)  &= \frac{1}{2} \sum^2_{i=1} \mathrm{P}\left( \tilde{\tau}> [\mathbf{C}_{1}]_{q,q}>\rho\xi^*_{i}  \right),
\end{align}
for  $\tilde{\tau}>  \rho\xi^*_{i}$, for $i\in\{1, 2\}$.
The marginal  CDF of the weak channel gain (strong noise) is given by $F_{[\mathbf{C}_{1}]_{q,q}}(x)=(F(x))^2$, which means the above probability can be found as follows:
\begin{align}
\mathrm{P}\left(U_1\in \mathcal{S}_2, E\right)  &= \left[F\left( \tilde{\tau}\right)\right]^2 -  \frac{1}{2} \sum^2_{i=1}  \left[F\left(\rho\xi^*_{i}  \right)\right]^2,
\end{align}
for  $\tilde{\tau}>  \rho\xi^*_{i}$, for $i\in\{1, 2\}$.
Therefore the overall outage probability for this  weak user is given by
\begin{align}
\mathrm{P}_{1} &=   2 \left(1- F\left( \phi_1\right)\right)F\left(\tilde{\tau}\right)+\frac{1}{2}\sum^2_{i=1} \left(1 - \left[F\left(\phi_i\right)\right]^2\right) \\ \nonumber & - \sum^2_{i=1} F\left(\tilde{\tau}\right)\left(1 - F\left( \phi_i\right)\right) + \frac{1}{2} \left[\left[F\left( \tilde{\tau}\right)\right]^2 - \sum^2_{i=1}  \left[F\left(\rho\xi^*_{i}  \right)\right]^2 \right]^+\\ \nonumber &+ \frac{1}{2}\left[\left[F\left( \tilde{\tau}\right)\right]^2 -  \sum^2_{i=1}  \left[F\left(\rho\xi^*_{i}  \right)\right]^2 \right]^+,
\end{align}
for $\xi^*_{i}\geq0$, where   $[x]^+\triangleq max(0,x)$. Following   steps similar to the ones above, the outage probability at the strong user can be obtained as follows:
\begin{align}
\mathrm{P}_{2} &=
2\left[F\left(\tilde{\tau}\right) - F\left( \rho\xi^*_{2} \right)\right]^+\left[1-F\left(\tilde{\tau}\right)\right] \\ \nonumber &  +\sum^2_{i=1} F\left(\tilde{\tau}\right)\left[F\left(\tilde{\tau}\right)   - F\left( \rho\xi^*_{i} \right)\right]^+ -\\ \nonumber &\sum^2_{i=1}  \frac{1}{2} \left[\left(F\left(\tilde{\tau}\right)\right)^2 - \left(F\left( \rho\xi^*_{i} \right)\right)^2\right]^+  \hspace{-0.8em}+ \frac{1}{2}\sum^2_{i=1} \left(1-F\left(\phi_i \right)\right)^2.
\end{align}
\begin{lemma}\label{lemma1}
For the  case with two users in each group, the use of  one-bit feedback achieves  the same diversity order as that with perfect user ordering.
\end{lemma}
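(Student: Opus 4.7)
The plan is to substitute the high-SNR expansion of $F$ developed in the proof of Theorem~\ref{theorem1} (specifically, $1-F(\rho\xi) \sim \frac{1}{(N-\tilde{M}+1)!(\rho a_{k,q}\xi)^{N-\tilde{M}+1}}$) into the closed-form expressions for $\mathrm{P}_1$ and $\mathrm{P}_2$ derived above, and then identify the slowest-decaying term in each. The crucial observation that drives the argument is that the feedback threshold $\tau$ (and hence $\tilde{\tau}=1/\tau$) is fixed, so $F(\tilde{\tau})$ is a constant independent of $\rho$, whereas $\phi_i=\max\{\tilde{\tau},\rho\xi^*_i\}$ equals $\rho\xi^*_i$ for all sufficiently large $\rho$, so $1-F(\phi_i)$ decays like $\rho^{-(N-\tilde{M}+1)}$.

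For the strong user I would first argue that every bracket of the form $[\,F(\tilde{\tau})-F(\rho\xi^*_i)\,]^+$ and $[(F(\tilde{\tau}))^2-(F(\rho\xi^*_i))^2]^+$ vanishes \emph{exactly} for $\rho$ large enough: since $F$ is monotone and $\rho\xi^*_i\to\infty$ while $\tilde{\tau}$ is constant, eventually $F(\rho\xi^*_i)>F(\tilde{\tau})$, killing these contributions. Consequently only the last term $\frac{1}{2}\sum_{i=1}^2 (1-F(\phi_i))^2$ survives in $\mathrm{P}_2$, and applying the expansion from \eqref{x1} immediately yields $\mathrm{P}_2\sim \rho^{-2(N-\tilde{M}+1)}$, matching the Theorem~\ref{theorem1} diversity order $p(N-\tilde{M}+1)$ at $p=2$.

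For the weak user I would first verify that the two $[\cdot]^+$ brackets in $\mathrm{P}_1$, which contain $[F(\tilde{\tau})]^2-\sum_i[F(\rho\xi^*_i)]^2$, are zero for large $\rho$ (the argument becomes strictly negative since each $F(\rho\xi^*_i)\to 1$). The remaining three terms can then be rewritten, using $1-[F(\phi_i)]^2 = (1-F(\phi_i))(1+F(\phi_i))$ and $1+F(\phi_i)\to 2$, as a linear combination of the quantities $(1-F(\phi_i))$ with bounded, $\rho$-independent coefficients built from $F(\tilde{\tau})$. Each such term decays like $\rho^{-(N-\tilde{M}+1)}$, which gives $\mathrm{P}_1\sim \rho^{-(N-\tilde{M}+1)}$ and matches the Theorem~\ref{theorem1} diversity for $p=1$.

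The main subtlety I expect is bookkeeping rather than anything deep: one must make sure that, after the $[\cdot]^+$ terms are discarded and the constant factors are collected, the coefficient of $\rho^{-(N-\tilde{M}+1)}$ in $\mathrm{P}_1$ is strictly positive (otherwise a higher-order term could dominate and accidentally inflate the diversity order). This reduces to checking that $2F(\tilde{\tau})+2-2F(\tilde{\tau})=2>0$ after collecting like terms, so the leading order is exactly $N-\tilde{M}+1$ and not larger. With that verification the conclusion of Lemma~\ref{lemma1} follows directly from the two asymptotic rates established above.
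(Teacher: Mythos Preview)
Your argument is correct, and it actually treats the complementary parameter regime to the one the paper works out. The paper, citing space limitations, analyses only $\mathrm{P}_2$ under the assumption $\tau<\min\{1/(\rho\xi^*_1),1/(\rho\xi^*_2)\}$ (equivalently $\tilde{\tau}>\rho\xi^*_i$, so $\phi_i=\tilde{\tau}$), which effectively lets $\tau$ shrink with $\rho$; it then expands $F$ simultaneously at $\tilde{\tau}$ and at $\rho\xi^*_i$, collects the resulting small quantities $\theta_0,\theta_1,\theta_2$, and reads off the exponent $2(N-\tilde{M}+1)$. You instead keep $\tau$ fixed and send $\rho\to\infty$, so that eventually $\phi_i=\rho\xi^*_i$ and every $[\cdot]^+$ bracket vanishes identically, leaving only the $(1-F(\phi_i))$--type terms whose order follows directly from \eqref{x1}. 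Your route is a little cleaner because it sidesteps the simultaneous small-$\tau$ expansion and treats $\mathrm{P}_1$ explicitly (the paper merely asserts that $\mathrm{P}_1$ and the other ranges of $\tau$ behave similarly); the paper's route, on the other hand, shows that the conclusion survives even when the threshold is scaled with SNR. Your positivity check for the leading coefficient of $\mathrm{P}_1$ is essentially right: collecting terms gives $(1+F(\tilde{\tau}))(1-F(\phi_1))+(1-F(\tilde{\tau}))(1-F(\phi_2))$, with both weights strictly positive, so the diversity order is exactly $N-\tilde{M}+1$ and cannot accidentally jump higher.
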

\begin{proof}Due to    space limitations, we consider only the case of $\tau<\min\left\{\frac{1}{\rho \xi^*_1},\frac{1}{\rho \xi^*_2}\right\}$.  The outage probability $\mathrm{P}_2$ can be approximated at high SNR as follows:
 \begin{align}
\mathrm{P}_{2} &\approx
2\left[\theta_2- \theta_0\right]\theta_0    +\sum^2_{i=1}\left[\theta_i   - \theta_0\right] +   \frac{1}{2}\sum^2_{i=1} \theta_i^2\\ \nonumber &-\sum^2_{i=1}  \frac{1}{2} \left[(1-\theta_0)^2 - (1-\theta_i)^2\right]\\ \nonumber &=
2\left[\theta_2- \theta_0\right]\theta_0     - \sum^2_{i=1}  \frac{1}{2} \left[\theta_0^2 - \theta_i^2\right]+   \frac{1}{2}\sum^2_{i=1} \theta_i^2,
\end{align}
where $\theta_0 =\frac{\tau^{N-\tilde{M}+1}}{(N-\tilde{M}+1)!( a_{k,q})^{N-\tilde{M}+1}}$ and $\theta_i=\frac{1}{(N-\tilde{M}+1)!(\rho a_{k,q}\xi^*_{i})^{N-\tilde{M}+1}}$. Since $\tau<\min\left\{\frac{1}{\rho \xi^*_1},\frac{1}{\rho \xi^*_2}\right\}$ and  both $\theta_i$ are at the order of $\frac{1}{\rho^{N-\tilde{M}+1}}$, the diversity order  is $2(N-\tilde{M}+1)$, the same as in Theorem \ref{theorem1}. The same conclusion can be made  for other choices of $\tau$ and    $\mathrm{P}_1$.
\end{proof}\vspace{-1em}

\section{Numerical Studies}
In this section, computer simulations are used to study the performance of the proposed massive-MIMO-NOMA scheme and verify the accuracy of the developed analytical results. This base station is equipped with a uniform circular array with $M=50$ isotropic antennas \cite{6542746,Dai15}. In Fig. \ref{fig fix2 1} the outage rate performance of the OMA and NOMA schemes is studied. As can be seen from the figure, the use of  the proposed massive-MIMO-NOMA scheme can yield a significant improvement in the system throughput. For example, at $\rho=20$dB and with $N=2$, the sum rate of the proposed NOMA scheme is three  times greater than  OMA. The accuracy of the developed analytical results for the case with perfect user ordering is verified in Fig. \ref{fig fix2 2}, and one can observe from the figure that the curves for the analytical results and simulations match perfectly.

In Fig. \ref{fig fix2 3} the impact of  one-bit feedback on the performance of the proposed MIMO-NOMA scheme is illustrated. As can be observed from the figure, the use of one-bit feedback achieves the same  diversity order as the case with perfect user ordering. For example, the slope of the curves for one-bit feedback is the same as that of the case with perfect feedback, which confirms the results shown in Lemma \ref{lemma1}. In addition, the user with strong CSI experiences less outage with one-bit feedback since it might be recognized as a user with poor CSI and is served with a smaller targeted data rate. \vspace{-1em}
\section{Conclusions}
In this letter, we have proposed a new massive-MIMO-NOMA scheme using limited feedback.  In particular, the use of the proposed scheme can decompose a massive-MIMO-NOMA system into  multiple SISO-NOMA channels, which significantly simplifies the design of MIMO-NOMA. Analytical results have been developed to evaluate the performance of the proposed scheme for   two scenarios, with perfect user ordering and with one-bit feedback, respectively.
\begin{figure}[!htp] \vspace{-0.2em}
\begin{center} \includegraphics[width=0.45\textwidth]{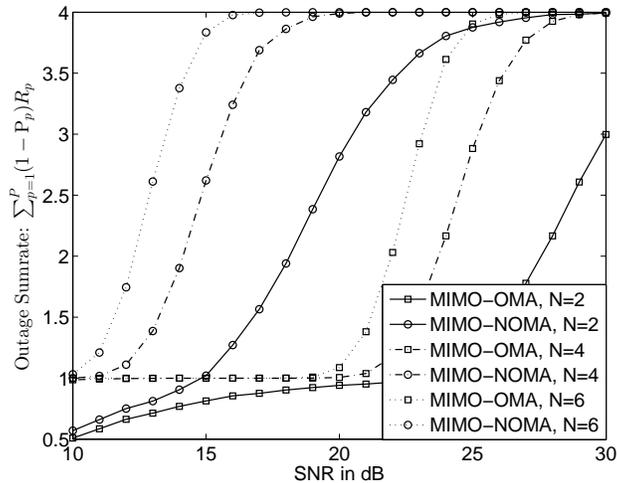}
\end{center}
\vspace*{-3mm} \caption{Outage sum-rates of two MIMO MA schemes. There are three users ($P=3$). $K=4$. $\alpha_1^2=\frac{5}{8}$, $\alpha_2^2=\frac{2}{8}$, $\alpha_2^2=\frac{1}{8}$,  $R_1=R_2=0.5$ bit per channel use (BPCU) and $R_3=3$ BPCU.  }\label{fig fix2 1} \vspace{-1.5em}
\end{figure}

\begin{figure}[!htp] \vspace{-0.2em}
\begin{center} \includegraphics[width=0.45\textwidth]{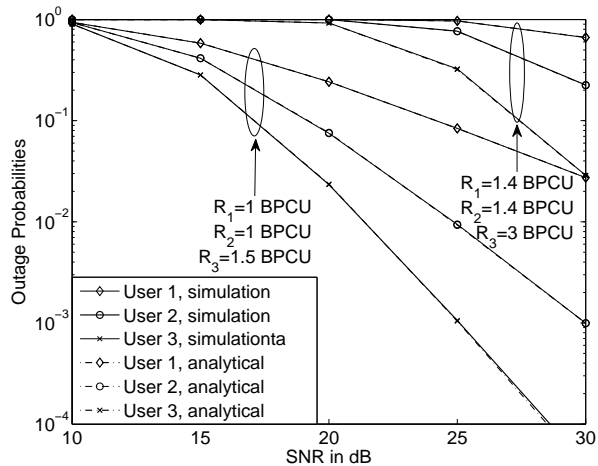}
\end{center}
\vspace*{-3mm} \caption{Accuracy of the developed analytical results with perfect user ordering.   There are three users ($P=3$). $N=2$ and $K=4$. $\alpha_1^2=\frac{5}{8}$, $\alpha_2^2=\frac{2}{8}$ and $\alpha_2^2=\frac{1}{8}$. $R_1=R_2=0.5$ bit per channel use (BPCU) and $R_3=3$ BPCU.  }\label{fig fix2 2} \vspace{-1em}
\end{figure}

\begin{figure}[!htp] \vspace{-0.2em}
\begin{center} \includegraphics[width=0.45\textwidth]{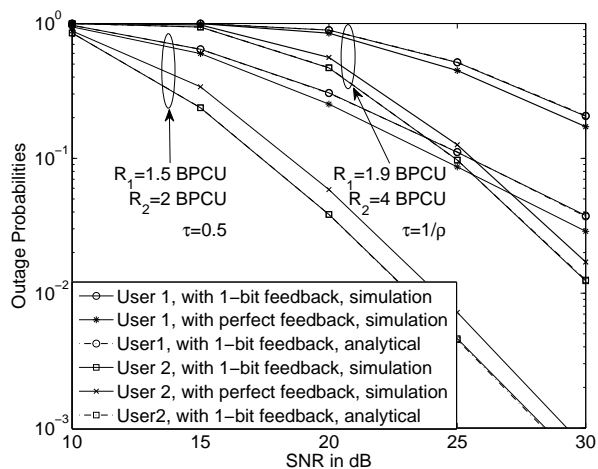}
\end{center}
\vspace*{-3mm} \caption{Outage performance of massive-MIMO-NOMA with different amounts of feedback. $P=2$,   $N=2$, $K=4$, $\alpha_1^2=\frac{3}{4}$, and $\alpha_2^2=\frac{1}{4}$.  }\label{fig fix2 3} \vspace{-2em}
\end{figure}

 \bibliographystyle{IEEEtran}
\bibliography{IEEEfull,trasfer}

  \end{document}